\newcommand{\subparagraph}{}
\def\BibTeX{{\rm B\kern-.05em{\sc i\kern-.025em b}\kern-.08emT\kern-.1667em\lower.7ex\hbox{E}\kern-.125emX}}
\pgfplotsset{width=10cm,compat=1.9}
\DeclarePairedDelimiter\floor{\lfloor}{\rfloor}
\newtheorem{theorem}{Theorem}
\newtheorem{lemma}[theorem]{Lemma}
\g@addto@macro{\@algocf@init}{\SetKwInOut{Parameter}{Parameters}} 
\pgfplotsset{compat=newest}
\begin{document}

\sloppy

\title{Proteus: A Scalable BFT Consensus Protocol for Blockchains}

\author{
\IEEEauthorblockN{Mohammad M. Jalalzai\textsuperscript{1,2}, Costas Busch\textsuperscript{1}, Golden G. Richard III\textsuperscript{1,2}}\\
\IEEEauthorblockA{
{\textsuperscript{1}Computer Science and Engineering Division}\\ {\textsuperscript{2} Center for Computation and Technology}\\
Louisiana State University
Baton Rouge, Louisiana, USA\\
}
Email: \url{mjalal7@lsu.edu}, \url{busch@csc.lsu.edu}, \url{golden@cct.lsu.edu}

}

\pagestyle{plain}
\maketitle
\thispagestyle{plain}

\begin{abstract}

Byzantine Fault Tolerant (BFT) consensus exhibits higher throughput in comparison to Proof of Work (PoW) in blockchains, but BFT-based protocols suffer from scalability problems with respect to the number of replicas in the network. The main reason for this limitation is the quadratic message complexity of BFT protocols. Previously, proposed solutions improve BFT performance for normal operation, but will fall back to quadratic message complexity once the protocol observes a certain number of failures. This makes the protocol performance unpredictable as it is not guaranteed that the network will face a a certain number of failures. 
As a result, such protocols are only scalable when conditions are favorable (i.e., the number of failures are less than a given threshold). To address this issue we propose {\em Proteus}, a new BFT-based consensus protocol which elects a subset of nodes $c$ as a root committee. Proteus guarantees stable performance, regardless of the number of failures in the network and it improves on the quadratic message complexity of typical BFT-based protocols to $O(cn)$, where $c<<n$, for large $n$. 

Thus, message complexity remains small and less than quadratic when $c$ is asymptotically smaller than $n$, and this helps the protocol to provide stable performance even during the view change process (change of root committee). Our view change process is different than typical BFT protocols as it replaces the whole root committee compared to replacing a single primary in other protocols. We deployed and tested our protocol on $200$ Amazon $EC2$ instances, with two different baseline BFT protocols (PBFT and Bchain) for comparison. In these tests, our protocol outperformed the baselines by more than $2\times$ in terms of throughput as well as latency. 
\end{abstract}





\section{Introduction} \label{Introduction}

A Blockchain is a distributed ledger in which blocks of transactions are stored in sequential order. Participants in blockchain networks use consensus through a peer-to-peer network to agree on each block of transactions. Each block contains an ordered set of transactions and a link (hash) to the previous block in the chain. Traversal of the hashes to previous blocks allows us to move through the history of transactions. Immutability for blocks ``buried'' in the blockchain is probabilistically guaranteed, since modifying a block invalidates the hashes of all newer blocks in the chain. Recently there has been a lot of work to address the scalability (number of replicas in the network), throughput (in terms number of transactions per second) and latency (time to insert the block of a transaction in the blockchain) issues of blockchains \cite{Jala1807:Window,DBLP:journals/corr/LiuLKA16a, Guerraoui:2010:NBP:1755913.1755950,SBFT,194906}. But increasing the number of replicas (participants) in the network can have negative effects on latency and throughput due to the increase in the number of messages being exchanged and processed within the network.

Protocols based on Proof of Work (PoW) are suffering from high latency and low throughput, but have high scalability. The main bottleneck of performance for PoW protocols lies in solving the cryptographic puzzle before proposing a block for inclusion in the chain. Bitcoin is an example of a PoW based protocol that has been shown to accommodate thousands of replicas, but its throughput is 6-10 transactions per second and it takes an average of 10 minutes to generate a new block~\cite{Nakamoto_bitcoin:a,DBLP:conf/ifip114/Vukolic15}. Bitcoin's PoW is very CPU-intensive and is responsible for the high consumption of electricity in Bitcoin (comparable to the entire electricity consumption of Ireland) \cite{6912770,IsBitCoinDecentralized,Sok}.
Ethereum is another well known blockchain ledger that uses PoW \cite{Ethereum-EIP-150}. While Ethereum's approach to PoW is somewhat different than Bitcoin's, primarily in an attempt to stop ASIC-enhanced mining, it suffers from some of the same drawbacks as Bitcoin.
In general, PoW-based protocols do not appear to be suitable for applications that require low latency and high transaction throughput. Furthermore, according to Luu {\em et al.} \cite{Smart-pools} $95\%$ of Bitcoin and $80\%$ of Ethereum networks' mining power resides within less than ten and six mining pools respectively, making the networks susceptible to possible $51\%$ attacks.  This is because these pools operate in a centralized fashion, with pool owners directly controlling the work of individual miners.
In PoW protocols there is also the risk of multiple forks that can result in double-spends.
Thus, even after a transaction is committed, clients have to wait a specific number of blocks to make sure the transaction is ``finalized'' in the longest fork. For Bitcoin, this usually amounts to six blocks (requiring 60 minutes)~\cite{Nakamoto_bitcoin:a} and for
Ethereum, this threshold is generally 37 blocks (due to a much faster block commit rate) \cite{wust2016securityofblockchaintech}.

On the other hand, Byzantine Fault Tolerant (BFT) protocols \cite{Lamport:1984:UTI:2993.2994}  are able to achieve block consensus in the presence of Bytanzine (malicious) replicas with the exchange of a few rounds of messages. Byzantine replicas may fail in arbitrary ways by sending malicious messages, crashing, or coordinating malicious attacks, in order to prevent the network from reaching consensus.  BFT protocols have been shown to  process thousands of requests per second\cite{Kotla:2008:ZSB:1400214.1400236,Guerraoui:2010:NBP:1755913.1755950},
and are known to have higher throughput than PoW-based protocols. 
BFT-based protocols typically use a single replica as a {\em primary} to serialize requests/blocks during each epoch of creating blocks. Once consensus is achieved on the request proposed by the primary, the block will be added to the chain at the end of the epoch. If more than one third of replicas observe malicious behavior by the primary replica, a view change will be triggered. In the view change process the failed/malicious primary will be replaced with a new primary. 
Another attraction of BFT protocols is their their finality property. Finality means that once a block is committed it will \emph{never} be revoked. That means BFT protocols do not develop forks during the consensus process, thus once a transaction and its block are committed, the application layer can use the result without waiting for further confirmation.

The scalability of BFT-based protocols is a major concern. One of the main factors negatively affecting the scalability and performance of BFT-based protocols is that they require $n \times n$ broadcast for $n$ replicas (quadratic message complexity) \cite{Castro:1999:PBF:296806.296824,DBLP:conf/ifip114/Vukolic15,Luu:2016:SSP:2976749.2978389}.
This high communication overhead is to guarantee that consensus will always be reached 
even after under Byzantine failures. Typically the protocols guarantee that agreement will be reached if the total number of Byzantine nodes is less than a third of the total number of nodes.
However, the BFT protocols usually do not distinguish the cases where there are failures or not,
resulting in high message overhead in any case.

In practice, the reliability or fault tolerance of a protocol depends on the actual number of faults $(f)$ it can tolerate while providing desired throughput. Since the number of faults tolerated by $BFT$ is bounded by $f<n/3$ \cite{Fischer:1985:IDC:3149.214121}, the other way to increase $f$ is to increase $n$, in other words, to design a scalable protocol.

\subsection{Related Work}
Jalalzai {\em et al.} \cite{Jala1807:Window} have recently presented the Musch BFT protocol that addresses BFT scalability where the message complexity has been reduced to $O(f'n + n)$, where $f'$ is the actual number of Byzantine nodes ($f' < n\big /3$). The message complexity for small $f'$ is linear in $n$, making their protocol scalable. SBFT \cite{SBFT} is another protocol that uses sets of collectors called $c$ and $e$, which are randomly selected and collect signatures for the prepare and commit phases of consensus for the block proposed by the primary. These collectors help to avoid all-to-all broadcast. During optimistic execution, SBFT achieves $O(c n)$ message complexity. But it switches to plain PBFT if optimistic execution fails (i.e., if the collectors fail). This causes severe degradation in SBFT performance and results in $O(n^2)$ message complexity. On the other hand, our protocol never switches to PBFT.

FastBFT \cite{DBLP:journals/corr/LiuLKA16a} also tries to address the scalability and performance issues. In FastBFT the replicas are arranged in a tree structure where the primary node acts as the root of the tree. The message complexity in FastBFT in optimistic execution is $O(n \log n)$, but can switch to $O(n^2)$ in case of failures. Additionally, during optimistic execution signature  aggregation is done over the tree and each tree node is responsible for collecting signatures from its children. The improvement in message complexity and reduction in signature size occurs with cost of latency, as the critical path length grows to $O(log n)$.
 
In chain-based BFT protocols \cite{Guerraoui:2010:NBP:1755913.1755950,10.1007/978-3-319-14472-6_7}, the nodes in the network are arranged serially in a chain order. Message complexity during normal execution is $O(n)$ but can grow to $O(n^2)$ in the worst case (a view change of primary). These protocols have shown high throughput but the latency is proportional to the length of the critical path, which reaches $O(n)$.

All of the aforementioned solutions switch to $O(n^2)$ message complexity once a certain number of failures are detected in the network (this threshold varies among protocols). Thus, reaching the failure threshold causes the protocol to switch to broadcast mode (fallback mode), resulting in unusable performance for large-scale systems. As a result, these protocols only provide desired performance when failure thresholds have not been met. 

Hot-stuff \cite{Hot-stuff} is another BFT-based protocol that improves the view change message complexity of PBFT by a factor of $O(n)$ (dropping view change message complexity from $O(n^3)$ to $O(n^2)$. But during normal execution its message complexity matches that of PBFT ($O(n^2)$). It also relaxes the BFT finality condition to merge the prepare phase of the next proposal with the current commit phase. Tendermint \cite{tendermint} combines BFT and Proof-of-Stack (Pos) into a single protocol. An important aspect of Tenderment is the use of rotating leader election, where after each block, a new leader is selected deterministically. More stake for a validator is translated into more voting power and 
 proportionally more times to be selected as a leader.
But it suffers liveness issue due to conflicting proposals. \cite{BlockchaininDWild}.

\subsection{Contributions}

Our proposed protocol's performance is not bounded by any threshold on the number of failures. In other words, protocol performance is not affected by the number of failures detected in the network and remains constant (guarantees stable performance) during normal execution. This is a very important and strong characteristic that enables the protocol to provide consistent performance guarantees. Additionally, our protocol provides constant latency in terms of critical path length, which is the number of one-way messages from block proposal to completion of consensus. 

These improvements are achieved by randomly selecting $c$ number of replicas from a large set of replicas with total size $n$, where $n$ are regular replicas.
Typically $c$ is small and with very large $n$, we will have $c<<n$. Our algorithm can tolerate up to $f < n/3$ failures. In each epoch, the BFT consensus is first executed among the $c$ replicas of the root committee, instead of the overall $n$ replicas.
The block proposed by the root committee will then be validated by the $n-f$ correct replicas.
Since the main BFT process will be executed within root committee of size $c$, which is much smaller than $n$, this gives us improved performance.
The message complexity in our protocol is $O(c^2 + cn)$ for normal as well as view change mode, 
and when $n$ is large, where $c<<n$, it becomes $O(cn)$.
Through experimental evaluation we also show that our protocol
outperforms both PBFT and Bchain in terms of throughput and latency.

Our protocol can tolerate up to $2c/3$ Byzantine failures in the root committee. 
Therefore, the root committee is more resilient than the typical case which tolerates less than $c/3$ nodes failing.
The $n$ regular replicas keep an eye on failures of the $c$ root committee replicas that generate a block. In case $2c/3$ or more nodes fail, a view change will occur and another root committee is selected. This is another unique aspect of our protocol in that 
a view change causes the root committee to be replaced, whereas other BFT-based protocols only replace the primary node.

\subsection*{Paper Outline}
The paper is organized as follows.  
In Section \ref{System Model} we give our system model,
in Section \ref{Protocol} we present the Proteus protocol.
Its formal analysis appears in Section  \ref{Proof of correctness}
and the message complexity analysis in Section \ref{Message Complexity}.
Section \ref{section:experiments} contains the experimental analysis.
We conclude our work in Section \ref{Conclusions}.

\section{System Model}
\label{System Model}
   We consider the Byzantine fault model for the replicas.
    Under this model, servers and  clients may deviate from their normal behavior in arbitrary ways, which includes hardware failures, software bugs, and other malicious behavior. Our protocol can tolerate up to $f$ Byzantine replicas where the total number of replicas in the network is $n$ such that $n=3f+1$. The size of the root committee $c$ can be chosen based on a security guarantee requirement ($P_f$) for the committee executing the BFT consensus. 
   This  model also assumes that replicas will not be able to break collision resistant hashes, encryptions and signatures. We assume that all messages sent by the replicas are signed.
   To ensure liveness in the asynchronous network we use a timeout to place an upper bound on the block generation time. 




\begin{figure}
    \centering

\begin{tikzpicture}[scale=0.77]
\begin{semilogyaxis}[title={\textbf{Probability of failure to root committee size}},
 xlabel={size of c}, ylabel={Probability of failure},
  ytick={0,1e-16 ,1e-13 ,1e-10,1e-7, 1e-4,1e-1},
  enlargelimits=0.15,ymin=0,ymax=1e-1,
    legend style={at={(0.8,0.95  )},
      anchor=north},
 ]
 \addplot plot coordinates {
 (10,  0.0137)
 (20, 0.000172)
 (30, 5.260839926258207e-07)
 (40, 2.1e-9)

 };
\addplot plot coordinates {
 (10, 0.016)
 (20, 0.00036030388980864454)
 (30, 4.5e-06)
 (40, 1.9e-07)
 (50, 3.596244442968402e-10)
 (60, 7.5e-14)
 };
\addplot plot coordinates {

(10, 0.0164)
 (20, 0.0004273235534911956)
 (30, 4.945542105581939e-06)
 (40, 5.3314648793937834e-07)
 (50, 3.57e-9)
 (60, 2.251310238560559e-12)
 (70, 0)
 };
 \legend{$n=100$\\$n=150$\\$n=200$\\}
 \end{semilogyaxis}

 \end{tikzpicture}
\caption{Caption: Root committee size vs failure probability with different $n$}
    \label{fig:Failure probability for root committee}
\end{figure}
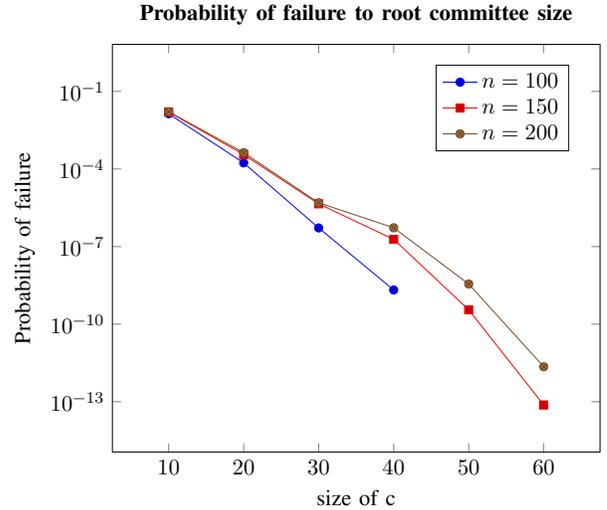

\section{Protocol}
\label{Protocol}

\label{protocol}
In Proteus, we have moved the burden of consensus to the root committee with size $c$. 
The root committee is running a customized $BFT$ based algorithm
(Algorithm \ref{Algorithm:Customized BFT}).
Regular replicas that are not members of the root committee simply verify the result of consensus, i.e. the proposed block. If $2c/3 + 1$ root committee replicas agree on the block proposed by the root committee, and in total $2f+1$ (root committee plus regular replicas) agree on the proposed block, then the block is committed and will be added to the blockchain. It should be noted that our protocol can tolerate up to two thirds ($2c/3$) Byzantine replicas in the root committee. Upon failure of the BFT protocol in root committee to propose a block, the whole root committee is replaced by the view change process.
Normal execution of the protocol can be summarized as:
\begin{enumerate}
\item The BFT protocol in the root committee successfully generates a block (collects more than two thirds {\em prepare} as well as {\em commit} messages for the block) and proposes it to regular replicas through broadcast. 
\item Upon receipt of a block, regular replicas verify the block against its history and check if the block is signed by $2c/3+1$ 
replicas from the root committee. 
\item If the block is valid, each regular replica signs the block and sends back the signature to the root committee.
\item Upon receipt of $2f+1$ signatures from regular replicas as well as root committee members, each root committee member commits the block and broadcasts the proof of acceptance ($2f+1$ signatures) of the block to regular replicas.
\item Upon receipt of proof, each regular replica commits the block, which is permanently added to the local history.
\end{enumerate}

\subsection{Selecting root committee members}
Members of the root committee are chosen randomly from the total number of replicas ($n$).
Suppose that out of the $n$ nodes $f$ are adversaries
(bad nodes) such that $f < n/3$.
The size $c$ is a predetermined number 
that specifies the average size of root committee.
Let $V$ be the set of nodes in the network with $|V| = n$.
We can write $V = A \cup B$,
where $A$ is the set of good nodes and $B$ 
is the set of bad nodes (adversaries),
such that $|A| = n -f$ and $|B| = f$.

Let $C$ denote the root-committee, such that $|C| = c$. 
We assume that $C$ is formed by randomly and uniformly picking 
a set of $c$ nodes out of $n$.
Therefore, the number of possible ways to pick any specific set 
of $c$ nodes is $\binom{n}{c}$.
The probability to pick exactly $a$ good nodes and $b$ bad nodes in $C$, 
such that $a + b = c$, is:
\begin{equation}
\label{eqn:basic1}
\frac{{\binom{n-f}{a}}{\binom{f}{b}}}{\binom{n}{c}}.
\end{equation}
Thus, the probability of having more than $2\cdot c/3$ bad nodes $(b)$ in root committee will be, the sum of all probabilities from $2c/3+1$ to $c$:
\begin{equation}
\label{eqn:basic1}
P_f=\begin{cases}
\sum_{b=2c/3+1}^{c}\frac{{\binom{n-f}{a}}{\binom{f}{b}}}{\binom{n}{c}},& \text{if } f > 2\cdot c/3\\
0,              & \text{otherwise.}
\end{cases}
\end{equation}


The safety of our algorithm is independent of the number of malicious/faulty replicas in the root committee. But for liveness we need at least one honest/correct replica to be a member of a $2c/3+1$ quorum that generates the block. We will discuss this in more detail in Section \ref{protocol}.

For the root committee to operate, it is randomly chosen from a set of $n$ replicas and the size $c$ is such that the failure probability having less than $c/3$ honest replicas is negligible as shown in Figure \ref{fig:Failure probability for root committee}. If the root committee fails to generate a valid block, it is replaced by another randomly chosen committee. Replacing the root committee with a newly chosen committee is called a view change. In our protocol, 
a view change is different from a typical BFT protocol view change, where only the primary replica is being replaced, while in our protocol, the entire root committee is replaced.  Under normal circumstances, this is a rare event as the probability of having more than $c/3-1$ byzantine replicas in root committee is low and once majority of root committee members are honest the protocol runs for sufficient period without changing the root committee.

\subsection{A secret recipe for high scalability and throughput}
The secret for higher scalability and throughput of this protocol can be discovered by plotting values of $n$ against $c$ for an appropriate range of values for $P_f$ in Equation \ref{eqn:basic1}. Figure \ref{nVSc} plots the growth of $c$ against $n$ for $P_f \leq 8.9 \cdot e^{-7}$. We can see that $c$ follows sublinear growth against $n$, such that for very large $n$, $c$ becomes negligible compared to $n$. Thus, as the growth of $c$ slows down compared to the total number of replicas in the network, it also diminishes the effect of quadratic communication ($c^2$) in the root committee on protocol scalability and throughput. We designed our protocol to leverage this property. 

\begin{figure}
    \centering

    \begin{tikzpicture} [scale=0.77]
\begin{semilogyaxis}[
 xlabel={n}, ylabel={c},
    legend style={at={(0.5,-0.15)},
      anchor=north,legend columns=-1},
 ]

 \addplot plot coordinates {
 (10, 5 )
 (40, 18)
 (70, 27)
 (100, 30)
 (130, 33)
 (160,36)
 (190, 36)
 (220, 39)
 (250,39)
 (280,39)
 (310,41)
 (340,42)
 (370,42)
 (400, 42)
 (430, 42)
 (460, 42)
 (490, 42)


 };

 \end{semilogyaxis}
\end{tikzpicture}
  \caption{Growth of $c$ against $n$}
    \label{nVSc}
\end{figure}
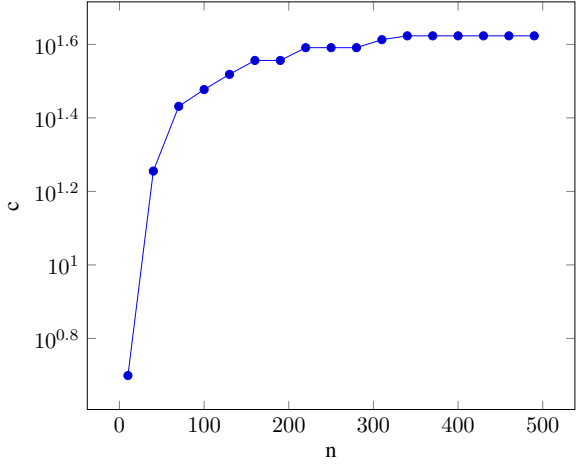

\SetKwFor{Upon}{upon}{do}{end}
\SetKwFor{Check}{check always}
{that}{end}


\begin{algorithm}
\DontPrintSemicolon 
   
\caption{Customized BFT algorithm for root committee member $i$}\label{Algorithm:Customized BFT}

\tcp{primary is a designated node in root committee}
\If{$i$ is primary in root committee} {
Collect transactions and form block\;
Propose block to root committee by broadcasting a pre-prepare message with the block\; 
}

\Upon{receipt of valid pre-prepare message}{
Broadcast prepare message with the proposed block to root committee\;
}
\Upon{receipt of $2c/3+1$ valid prepare messages}{
Broadcast commit message to root committee\;
}

\Upon{receipt of $2c/3+1$ valid commit messages for proposed block}{
Return success and continue to Algorithm \ref{Algorithm:Root committee member}\;
}
\If{received timeoutfailure from root node $j$ for block number $B_t$ }{
\If{$j$ has not previously received from $i$ the most recent missing blocks since $B_t$}{
Update $j$ up to latest block\;
}

}
\If{not returned success by block timeout}{
Broadcast timeoutfailure complaint to root committee\;
Return timeoutfailure and continue with Algorithm \ref{Algorithm:Root committee member}\;
}

\end{algorithm}


\begin{algorithm}
\DontPrintSemicolon 
   
\caption{Root committee member $i$}\label{Algorithm:Root committee member}


\Upon{successful completion of  BFT protocol in Algorithm \ref{Algorithm:Customized BFT}}{
Broadcast the block to regular replicas\;
}
\Upon{$i$ returns timeoutfailure from Algorithm \ref{Algorithm:Customized BFT}}{
Accept messages from root nodes to synchronize local history\;
}

\Upon{receipt of $2f+1$ valid signatures for proposed block hash}{
Commit block\;
Broadcast aggregated $2f+1$ valid signatures\;
}

\Upon{detecting proof of maliciousness: two different signed blocks or receipt of $f+1$ timeout complaints)}{
Broadcast maliciousness proof\;
\tcp{every node uses same random number generation seed}
Randomly select $c$ members of root committee from set of $n$ replicas\;
\eIf{replica $i$ is not member of new root committee}{
Execute Algorithm \ref{Algorithm:Regular ViewChange}\;

}{
Execute Algorithm \ref{Algorithm: root committee ViewChange}
(and concurrently Algorithm \ref{Algorithm:Regular ViewChange} as a regular replica)\;
}

}
\end{algorithm}

\begin{algorithm}[t]
\DontPrintSemicolon

\caption{Regular member replica k}\label{Algorithm:Regular members}

\Upon{receipt of valid block from root committee (until timeout)}{

Sign block hash and broadcast it to root committee\;

}


\If{timeout for current block}{
Broadcast timeout complaint to root committee
}
\Upon{receipt of a message from root committee member with $2f+1$ 
signatures for block}{
Commit block\;
}

\tcp{Initiate view change actions}
\Upon{receipt of a message with $f+1$ timeout complaints or receipt of invalid message (block or signatures)}{ 
\tcp{Transition to new view, based on common random number generation seed}
Randomly select $c$ members of root committee from set of $n$ replicas\;
\eIf{replica $k$ is not member of root committee}{
Execute Algorithm \ref{Algorithm:Regular ViewChange}\;
}{
Execute Algorithm \ref{Algorithm: root committee ViewChange}

}


}

\end{algorithm}

\begin{algorithm}[t]
\DontPrintSemicolon

\caption{ViewChange for regular replica $i$}\label{Algorithm:Regular ViewChange}
Broadcast local history $V_i$ to new root committees\;
\Upon{Receipt of a $Q$ message from a new committee member $j$}{
\If{$2f+1$ replicas are updated to the same history in $Q$}{
Broadcast READY message $R_i$ to new root committee\;
\Upon{Receipt of $P$ from a new root committee member $j$}{
\If{$P$ has at least $2f+1$ distinct READY messages}{
Synchronize local history $V_i$ according to $Q$\;
return to Algorithm \ref{Algorithm:Regular members}\;
}
}
}

}

\end{algorithm}

\begin{algorithm}[t]
\DontPrintSemicolon

\caption{ViewChange for new root committee member $j$}\label{Algorithm: root committee ViewChange}
Broadcast local history $V_j$ to new committee\;
\Upon{Receipt of $V_i$ from replica $i$}{

$Q \gets  Q \cup V_i$\;
\If{$Q$ contains at least $2f+1$ same histories)}{
Broadcast $Q$ to all replicas\;
}

}

\Upon{Receipt of $R_i$}{
$P \gets  P \cup R_i$\;
\If{$P$ has accumulated at least $2f+1$ distinct READY messages}{
Broadcast $P$ to all replicas\;
}

}
Return to Algorithm \ref{Algorithm:Root committee member}
\end{algorithm}

\subsection {Detailed Protocol Operation}
Algorithm \ref{Algorithm:Customized BFT}, Algorithm \ref{Algorithm:Root committee member}, and Algorithm \ref{Algorithm:Regular members} describe the normal execution between root committee and regular replicas. If normal execution fails, then our protocol switches to view change mode executing algorithms \ref{Algorithm:Regular ViewChange} and \ref{Algorithm: root committee ViewChange} to recover from failure.
Note that the root committee members also run the protocols for regular replica nodes in the normal mode.

\paragraph{Normal mode.}
A designated primary node in the root committee proposes a block by broadcasting a pre-prepare message to other root committee members (Algorithm \ref{Algorithm:Customized BFT}, lines 1-4). A pre-prepare message $B=(\langle ``Pre\mbox{-}prepare",v,s,h,d \rangle_p,m)$  contains view number($v$), block sequence number(s), transaction list($m$) and its hash($h$) and previous blockhash($d$).
A replica $i$ in the root committee begins a customized BFT algorithm after receipt of a pre-prepare message. Then, replica $i$ broadcasts a prepare message $\langle ``Prepare",v,s,h,i \rangle_i$
if it finds the pre-prepare message to be valid. The validity check of the pre-prepare message includes checking the validity of $s$, $v$, $d$, $h$ and  transactions inside $m$ (Algorithm \ref{Algorithm:Customized BFT}, lines 5-7).
Upon receipt of more than $2c/3$ prepare messages from other root committee members, the replica $i$ broadcasts a commit message $\langle ``Commit",v,s,h,i \rangle_i$ to the root committee (Algorithm \ref{Algorithm:Customized BFT}, lines 8-10).
If replica $i$ receives more than $2c/3$ commit messages from other root committee members, then it will return success to Algorithm \ref{Algorithm:Root committee member} (Algorithm \ref{Algorithm:Customized BFT}, lines 11-13).


Since our protocol can tolerate up to $2c/3$ Byzantine nodes in the root committee, then the $2c/3+1$  nodes involved above have to include at least one correct replica in root committee during the  consensus process to generate a block. While other correct replicas (at most $c/3-1$) may not be participating in the consensus process as malicious primary may not send them messages. These $c/3-1$ correct replicas will need to sync with other root replicas. Therefore, replica $i$ might receive a \emph{timeoutfailure} for block number $B_t$. If replica $i$ has not received a \emph{timeoutfailure} message from the same replica for the same block or any block with larger sequence number then the replica will forward prepare and commit messages for missing block (Algorithm \ref{Algorithm:Customized BFT}, lines 14-17). If replica $i$ gets timeout and did not receive valid messages during the consensus process, it will return a \emph{tiemoutfailure} error to Algorithm \ref{Algorithm:Root committee member}.

If root committee members successfully generated a proposal block(containing  pre-prepare message, a commit message and $2c/3+1$ signatures  for commit as proof) then they will broadcast it to the regular committee members (Algorithm \ref{Algorithm:Root committee member}, lines 1-3).
Upon receipt of a block from the root committee, regular replicas check if it is signed by at least $2c/3+1$ root committee members and if verified, a regular replica $j$ sends back a signed Approval message $\langle``Approval",v,s,h,j \rangle_j$ 
to the root committee
(Algorithm \ref{Algorithm:Regular members}, lines 1-3). 
Each member of the root committee aggregates $2f+1$ signatures $\sigma(h)$ (including commit signatures from root committee as well as Approval message signatures from regular replicas) and then commits the block and broadcasts Confirm message  $\langle ``Confirm", v,s,h,i, \sigma(h)\rangle_i$ 
to regular replicas (Algorithm \ref{Algorithm:Root committee member}, lines 7-10).
Upon receipt of $2f+1$ valid signatures, regular committee members also commit the block (Algorithm \ref{Algorithm:Regular members}, lines 7-9). But there is a possibility that the root committee might behave in malicious way. 
Furthermore, the primary replica in a root committee might be malicious, thus, even if the number of honest replicas are in the majority, the primary can still cause the root committee to fail. In case a block can't be committed, our protocol switches to view change mode to select an entirely new root committee. 

\paragraph{BFT failure cases.}
 Based on our assumption, there are at least $c/3$ honest replicas in root committee. This results in the following failure cases being relevant:
 \begin{enumerate}
 \item {\em Number of malicious replicas in the root committee is more than $c/3$:}
If the number of malicious replicas is at least $c/3+1$, and they chose to behave maliciously, no block will be generated, as it will not be possible to collect $2c/3+1$ signatures for the block. Thus, regular replicas will timeout and will start the view change process.
  
  
 
 \item {\em Primary in root committee is malicious:} 
 If the primary replica is malicious 
 then the root committee might not be able to generate a block successfully, as the primary might simply not initiate the block proposal. In such a case, a timeout will trigger a view change. 
 
 \item {\em Primary in the root committee as well as at most an additional $(2c/3)-1$ root committee members are malicious:} 
 In this case, the malicious replicas in the root committee can collude with the malicious primary to force honest replicas in the root committee to accept different block proposals. But as honest replicas in the root committee broadcast block proposals (along with root members' commit messages for the block), the other replica will detect this discrepancy and this will trigger a view change. 
 \end{enumerate}
 
More details of these cases and others are addressed in Section~\ref{Proof of correctness}.

\paragraph{View change.}
Unlike ordinary BFT protocols, a view change in our protocol is achieved by replacing the root committee members with $c$ new members. A view change occurs if the root committee fails to generate a new valid block. If no new block is generated, then after a timeout a new committee immediately takes over and continues the consensus process. During each epoch, a regular replica waits to receive a proposed block from the root committee. If a regular replica $i$ does not receive the block after a timeout then it considers that the root committee has failed and reports this to the root committee.
If $f+1$ nodes report a timeout, then this triggers the view change process in root committee and the root committee members forward $f+1$ timeout reports to regular replicas triggering view change in regular replicas. In the view change(triggered in Algorithm \ref{Algorithm:Root committee member}, lines 11-19 and Algorithm \ref{Algorithm:Regular members}, lines 10-17), each node selects another set of $c$ replicas for the new root committee (using a pre-specified random seed, which guarantees that every replica selects the same root committee). Each replica broadcasts its local history $V_i=\langle``ViewChange",v+1,s',H,i,(\sigma (H)) \rangle_i $ to the new root committee members. Local history includes the latest block sequence number($s'$), its hash($H$), incremented view number($v+1$), and signature evidence of at least $2f+1$ ($\sigma (H)$) replicas that have approved the block.

The new root committee members wait to receive $2f+1$ local histories $V_i$ from all replicas and aggregate them into $Q$. Once it receives $V_i$ from replica $i$, its local history $H_i$ (latest committed block) is extracted from $V_i$. Out of $2f+1$, it is guaranteed, that at least $f+1$ are honest replicas.
Thus, the most recent history in these $f+1$ replicas will match and this will be the starting point for the next block to be generated. Among other information, $V_i$ also contains 
the latest block sequence number (height of the block) and the replica $id$ that proposed the block.
Root committee members will broadcast $Q$ to all replicas(Algorithm \ref{Algorithm: root committee ViewChange}). Upon receipt of $Q$  replica $i$ makes sure that its history matches with history of $Q$(agreed by at least $f+1$ replicas). If its history matches, replica $i$ sends back Ready message ($R_i$) to new root committee. Root committee member $j$ will aggregate $2f+1$ $P$ messages and broadcast it to all replicas. Upon receipt of $P$ that includes $2f+1$ ready messages, replica $i$ is now ready to take part in new view.  If replica $i$'s history does not match that of $Q$ it will synchronize its history(Algorithm \ref{Algorithm:Regular ViewChange}). 

Since it is not guaranteed that at least $2c/3+1$ replicas in the root committee will be honest, it is difficult to transfer the transactions that have been previously agreed in the previous view by $2c/3+1$ root members during the preparation phase.
To guarantee all transactions are executed properly, clients can resend their transactions to the new primary of the new root committee, after they realize that they have not received any response for their transaction. 
\section{Proof of Correctness } 
\label{Proof of correctness}

We prove that the algorithm either produces a block in an epoch or view change occurs.









\begin{lemma}
\label{lemma:block receipt guarantee}
If a block is signed by $2c/3+1$ root committee nodes then it is guaranteed
that all correct replicas will receive it.
\end{lemma}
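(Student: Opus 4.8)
The plan is to argue by a quorum-intersection and honesty argument that if a block carries $2c/3+1$ valid root-committee signatures, then at least one of the signers is an honest root member, and that honest member will have broadcast the block to all regular replicas, so every correct replica eventually receives it. First I would recall from the system model that the root committee $C$ has size $c$ and, by the assumption maintained throughout the protocol (the failure event $P_f$ of having more than $2c/3$ bad nodes in $C$ is negligible and is treated as ruled out), at least $c/3$ of the root members are honest; equivalently, the number of Byzantine root members is at most $2c/3$. Hence any set of $2c/3+1$ root members that signed the block must contain at least one honest member, call it $h \in C$.

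Next I would invoke the behavior of honest root members in Algorithm~\ref{Algorithm:Root committee member}: upon successful completion of the customized BFT protocol (Algorithm~\ref{Algorithm:Customized BFT}), an honest root member broadcasts the generated block to all regular replicas. Since $h$ is honest and participated in signing, $h$ follows the protocol and broadcasts the block. Under the communication model (authenticated channels, messages from honest nodes are eventually delivered — the asynchrony only delays, it does not drop, messages among correct replicas), every correct replica eventually receives the block from $h$. This covers correct regular replicas directly; correct root members also run the regular-replica code and additionally already hold the block from the BFT phase, so they trivially "receive" it as well.

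The one subtlety I would need to handle carefully is the case where a correct root member is among the $c/3-1$ members the malicious primary deliberately excluded from the BFT phase — such a member did not itself produce the block, but Lemma~\ref{lemma:block receipt guarantee} still needs it to receive a copy. Here the same honest signer $h$ comes to the rescue: $h$ broadcasts to \emph{all} replicas, so the excluded correct root members get it too (and, per Algorithm~\ref{Algorithm:Customized BFT} lines~14--17, they can additionally be synchronized by other root nodes). So the statement holds uniformly for every correct replica, regular or root. I expect the main obstacle to be purely expository rather than mathematical: being precise about what "all correct replicas" includes (regular replicas plus root-committee members, including the possibly-excluded honest ones) and making the existence-of-an-honest-signer step airtight given that the protocol tolerates up to $2c/3$ Byzantine root members — the counting $\left(2c/3+1\right) - \left(2c/3\right) \ge 1$ is what forces an honest signer, and that single honest signer's broadcast is what delivers the block everywhere.
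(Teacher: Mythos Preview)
Your proposal is correct and follows essentially the same approach as the paper: a counting argument shows that among the $2c/3+1$ signers at least one must be honest (since at most $2c/3$ root members can be Byzantine), and that honest signer's broadcast delivers the block to every correct replica. The paper's proof is just the two-line version of this; your extra discussion of excluded honest root members is a nice clarification but not something the paper spells out.
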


\begin{proof}
Since the root committee consists of $c/3$ correct nodes,
then one of the $2c/3 + 1$ nodes that signed the block has to be 
a correct node.
That correct node will broadcast the block to all replicas. 
\end{proof}

\begin{lemma}
\label{lemma:block commit}
If at least $2f+1$ correct replicas agree on the same block then it gets committed to local histories of all correct nodes, otherwise a view change will occur.
\end{lemma}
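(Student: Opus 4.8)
The plan is to prove the dichotomy by tracing a validly proposed block through the remaining protocol steps in the affirmative case, and by invoking the timeout machinery in the negative case. So suppose first that at least $2f+1$ correct replicas agree on block $B$, i.e.\ each has verified $B$ against its local history and returned a signed Approval message (or, for a root member, a commit message) for the hash $h$ of $B$. Such a $B$ carries $2c/3+1$ root‑committee signatures, so by Lemma~\ref{lemma:block receipt guarantee} every correct replica has received $B$, and by the committee‑size assumption there is at least one correct root‑committee member. Since messages among correct replicas are delivered within the timeout, every correct root member eventually collects $2f+1$ valid signatures for $h$, commits $B$, and broadcasts the Confirm message $\langle ``Confirm",v,s,h,i,\sigma(h)\rangle_i$ carrying the aggregated proof (Algorithm~\ref{Algorithm:Root committee member}, lines~7--10). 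Reliable delivery from that correct member then guarantees that every correct regular replica receives a Confirm with $2f+1$ valid signatures and commits $B$ to its local history (Algorithm~\ref{Algorithm:Regular members}, lines~7--9), and every correct root member commits as well; this proves the first half. I would also record here that the committed block is unambiguous: a correct replica signs at most one block per sequence number, and any two $(2f{+}1)$-quorums out of $n=3f+1$ replicas intersect in at least $f+1$ replicas, hence in a correct one, so at most one block at a given height can gather $2f+1$ agreements.

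For the contrapositive ``otherwise'' case, suppose no block is agreed on by $2f+1$ correct replicas during the epoch. Then no correct root member can ever assemble $2f+1$ valid signatures for a block, so no Confirm is produced by a correct node and no correct replica commits in this epoch. Each correct replica arms a timeout on the current block and, on expiry without a commit, broadcasts a timeout complaint (Algorithm~\ref{Algorithm:Regular members}, line~5, and Algorithm~\ref{Algorithm:Customized BFT}). Using the invariant that all correct replicas enter the epoch with the same committed history, the failure to form the agreement must be due either to the (possibly faulty) primary not delivering a valid proposal to them --- in which case all $2f+1$ correct replicas complain --- or to the root committee equivocating with two differently signed blocks, which is detected as a proof of maliciousness (Algorithm~\ref{Algorithm:Root committee member}). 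In either situation at least $f+1$ complaints accumulate at the root committee and, when forwarded, at the regular replicas, which triggers the view change in which every correct node re‑samples the same new root committee from the shared seed. Hence in every epoch either $B$ is committed by all correct nodes or a view change occurs.

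The main obstacle is precisely the ``otherwise'' direction: one must rule out a silent stall and show that \emph{not} reaching the $(2f{+}1)$-agreement genuinely yields at least $f+1$ timeout complaints. This is where the common‑history invariant at the start of the epoch is essential, since it excludes the possibility that correct replicas split their Approvals for ``legitimate'' reasons (different views of the chain), leaving only the two failure patterns above; and it relies on the model's assumption that the block timeout is eventually large enough to distinguish a faulty root committee from mere network delay. I would therefore state (or cite) that invariant explicitly before the proof and lean on it in the case analysis.
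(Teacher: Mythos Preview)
Your proposal is correct and follows essentially the same approach as the paper: invoke Lemma~\ref{lemma:block receipt guarantee} for delivery, then argue the dichotomy via the same two failure scenarios (root-committee equivocation detected as a proof of maliciousness, or $f+1$ timeout complaints), both of which trigger a view change. Your write-up is in fact more careful than the paper's---you add the quorum-intersection uniqueness argument and make explicit the common-history invariant and timeout assumption that the paper uses silently---but the decomposition and key ideas are the same.
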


\begin{proof}
From Lemma \ref{lemma:block receipt guarantee},
if a block is signed by at least $2c/3 + 1$ committee members then it is broadcast to all 
replicas.

In order for a block to be committed it must be signed by at least $2f+1$ regular replicas (Algorithm \ref{Algorithm:Root committee member}, lines 7-10). Otherwise, 
if this does not happen then there are two possible scenarios:
\begin{itemize}
    \item
    Two different blocks have been proposed by the root committee $C$. This can happen if $2c/3$ Byzantine nodes from $C$ and one additional correct committee node $u \in C$ decide one block, while again $2c/3$ Byzantine and one correct node $v \in C$, $v \neq u$, decide on another block. The blocks may be sent to different replica sets, or replicas may pick one of the two to sign. In either case, some correct node of the root committee receives two different  signed block hashes from the replicas (\ref{Algorithm:Root committee member}, line 11).
    \item
    $f+1$ replicas have not received any valid block, and therefore $f+1$ timeout complaint messages are reported to the root committee (\ref{Algorithm:Root committee member}, line 11).
    
\end{itemize}
Both of these cases will create a proof of maliciousness which will trigger a view change (Algorithm \ref{Algorithm:Root committee member}, lines 12-18 and Algorithm \ref{Algorithm:Regular members} line 10.).

Therefore, if view change does not occur then a block was signed by $2f+1$ regular replicas.
This causes the block to be committed by the correct replicas 
in root committee 
(\ref{Algorithm:Root committee member}, lines 7-9)
and also by the correct regular replicas 
(Algorithm \ref{Algorithm:Regular members} line 7-9).
\end{proof}

\begin{lemma}
\label{lemma:consistent histories}
After a view change completes, all histories of correct nodes are consistent.
\end{lemma}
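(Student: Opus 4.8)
The plan is to pin down a well-defined ``latest committed block'' $H^\ast$ that exists immediately before the view change, show that the view-change protocol (Algorithms \ref{Algorithm:Regular ViewChange} and \ref{Algorithm: root committee ViewChange}) forces every correct node to converge to exactly the hash chain ending at $H^\ast$, and conclude that all correct histories become identical, hence consistent. Everything rests on quorum intersection over $n = 3f+1$ replicas.

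First I would establish uniqueness of committed blocks. By Lemma \ref{lemma:block commit} (and Algorithm \ref{Algorithm:Root committee member}, lines 7--9), any block committed by a correct node carries $2f+1$ valid approval signatures; since any two sets of $2f+1$ replicas intersect in at least $f+1$ nodes, hence in at least one correct node, and a correct node signs at most one block per sequence number, no two distinct blocks are ever committed at the same height. Therefore the committed blocks of all correct nodes lie on one chain, and I may let $H^\ast$ be the committed block of greatest sequence number held by any correct node before the view change; it is certified by a set $S$ of $2f+1$ signatures, of which at least $f+1$ are from correct replicas.

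Next I would argue that $H^\ast$ necessarily reappears in the view-change quorum. A new root-committee member assembles $Q$ from $2f+1$ local histories $V_i$; these originate from at least $f+1$ correct replicas, and since at most $f$ correct replicas can fail to hold $H^\ast$ (the remaining $\ge f+1$ lie in $S$), at least one $V_i$ in $Q$ belongs to a correct replica whose chain reaches $H^\ast$. A Byzantine replica cannot advertise anything beyond $H^\ast$, because each $V_i$ must embed $2f+1$ signatures on its claimed head and no such certificate exists past $H^\ast$; and a correct replica never advertises anything past $H^\ast$ by the choice of $H^\ast$. Hence the maximal certified history in any well-formed $Q$ is precisely the chain ending at $H^\ast$, a malformed $Q$ is rejected, and a correct replica receiving $Q$ either already holds that chain or synchronizes to it (Algorithm \ref{Algorithm:Regular ViewChange}). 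Every correct replica then sends a READY message, so a correct new-committee member can aggregate $P$ from $2f+1$ READYs and broadcast it, after which each correct replica finishes the view change with history equal to the chain ending at $H^\ast$. Since a history is a hash chain, fixing $H^\ast$ fixes every earlier block as well, so all correct nodes end with identical, hence consistent, histories.

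The main obstacle is an adversarial new root committee. A Byzantine member might try to propose a $Q$ that deliberately omits $H^\ast$, or the committee might contain no correct member so that the $Q$/$P$ broadcasts never occur; the first is ruled out by the intersection count above, and the second is outside the hypothesis, since ``after a view change completes'' presupposes that a correct member drove the protocol, and otherwise a fresh timeout triggers selection of another committee (the probabilities of Figure \ref{fig:Failure probability for root committee} make obtaining a committee with a correct member essentially certain). I would also double-check the benign edge case in which several correct replicas lag at distinct heights below $H^\ast$: their $V_i$ simply carry certificates for older blocks, so the new committee still derives $H^\ast$ as the maximal certified point and pulls all of them forward to it.
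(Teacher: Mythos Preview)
Your route diverges from the paper's in an essential way. The paper's argument is a one-line quorum intersection applied directly to the $Q$ messages: if two new-committee members $j_1,j_2$ broadcast $Q_{j_1},Q_{j_2}$ certifying distinct histories $H_1\neq H_2$, each backed by $2f+1$ matching local histories, then the two $(2f+1)$-sets intersect in at least $f+1$ replicas, hence in a correct replica, which cannot have submitted two different $V_i$'s---contradiction. The paper never tries to name the agreed history; it only shows it is unique, and then every correct node adopts it.

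You instead attempt to identify that history constructively as the chain ending at a maximal committed block $H^\ast$, and here there is a genuine gap. First, the rule in Algorithm~\ref{Algorithm: root committee ViewChange} is ``$Q$ contains at least $2f+1$ \emph{same} histories,'' not ``take the maximal certified history in $Q$''; your reasoning targets the latter criterion, so it does not track the protocol as written. Second, and more seriously, your counting step---``at most $f$ correct replicas can fail to hold $H^\ast$ (the remaining $\ge f+1$ lie in $S$)''---conflates \emph{signing} the Approval for $H^\ast$ with \emph{holding} $H^\ast$ in local history. A correct replica in $S$ may have sent its Approval signature yet never received the aggregated Confirm before the view change triggered, so its $V_i$ legitimately reports an earlier block. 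Your intersection bound then no longer guarantees that $H^\ast$ appears in an arbitrary $(2f+1)$-sized $Q$, and the argument that every well-formed $Q$ recovers exactly $H^\ast$ breaks. The paper's purely negative uniqueness argument sidesteps this issue entirely.
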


\begin{proof}
During a view change each replica sends its local history $V_i$ to the new root committee
(First line in Algorithms \ref{Algorithm:Regular ViewChange} and \ref{Algorithm: root committee ViewChange}).
Each member of the root committee $j$ collects at least $2f+1$ local histories and aggregates them into a single message $Q$ which it broadcasts to all replicas
(Algorithm \ref{Algorithm: root committee ViewChange}, lines 2-7).
Each replica $i$ receives the $Q$ messages from all members of the new root committee,
and verifies that at least $2f+1$ replicas have agreed on the same history and most recent block,
and then $i$ broadcasts the READY message to the new root committee (Algorithm \ref{Algorithm:Regular ViewChange}, lines 3-4).   
Each node $j$ of the new root committee accumulates in $P$ the READY messages that it receives, and when it receives at least $2f+1$ messages then it broadcasts $P$
(Algorithm \ref{Algorithm: root committee ViewChange}, lines 8-12).
When the regular replica $i$ receives message $P$ with at least $2f+1$ READY messages, 
then it updates its local history (Algorithm \ref{Algorithm:Regular ViewChange}, lines 5-10).

Suppose that two different histories $H_1 \in V_{j_1}$ and $H_2 \in V_{j_2}$ appear in $Q_{j_1}$ and $Q_{j_2}$, respectively,
where $j_1, j_2 \in C$.
Since each of $Q_{j_1}$ and $Q_{j_2}$ are constructed from $2f+1$ replicas' common histories,
then it has to be that $f+1$ are common nodes for $H_1$ and $H_2$.
Since there are at most $f$ Byzantine nodes, there is a common correct node that has 
proposed two different histories $H_1$ and $H_2$, which is impossible.
Thus, each correct node will update its local history to the agreed history of at least $2f+1$ replicas. 
\end{proof}

From Lemmas \ref{lemma:block commit} and \ref{lemma:consistent histories} we obtain the following theorem.

\begin{theorem}
Under normal operation and after a view change, all correct nodes maintain a consistent history
which includes the latest committed block.
\end{theorem}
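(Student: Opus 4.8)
The plan is to derive the theorem by combining Lemma~\ref{lemma:block commit} and Lemma~\ref{lemma:consistent histories}, splitting the argument into the normal-operation case and the post-view-change case, and then tying the two together by an induction on epochs so that the phrase ``latest committed block'' is well defined across view changes.

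First I would handle normal operation. By Lemma~\ref{lemma:block receipt guarantee}, a block signed by $2c/3+1$ root committee members reaches every correct replica. Lemma~\ref{lemma:block commit} then says that either at least $2f+1$ correct replicas agree on it and it is committed to the local histories of all correct nodes, or else a view change occurs. So on the branch where no view change happens, the committed block is appended identically at every correct node; this preserves consistency and makes that block the new ``latest committed block.''

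Next I would handle the branch where a view change does occur. Here Lemma~\ref{lemma:consistent histories} already gives that, once the view change completes, all correct nodes hold a consistent history, namely the unique history agreed by at least $2f+1$ replicas. The remaining point is that this agreed history still contains the most recently committed block. The key observation is that any block committed before the view change was, by Lemma~\ref{lemma:block commit}, placed in the local history of every correct node; there are at least $2f+1$ such nodes, so among the $2f+1$ local histories $V_i$ aggregated into $Q$, at least $f+1$ come from correct nodes carrying that committed block. Hence the committed block — and every earlier committed block — lies in the $(2f+1)$-agreed prefix of $Q$, so no committed block is rolled back by the view change.

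Finally I would wrap the two cases in an induction over epochs: the invariant ``all correct nodes share a consistent history that contains every committed block'' holds at genesis, and each epoch either extends it by one committed block (normal case) or leaves it intact (view-change case), which is exactly the statement of the theorem. I expect the main obstacle to be this second case — making rigorous that the view-change quorum of $2f+1$ histories necessarily reflects the last committed block rather than some stale, shorter history. This needs the counting argument that a committed block is held by at least $2f+1$ correct nodes together with the fact that any two $(2f+1)$-sized quorums intersect in at least $f+1$ nodes, of which at least one is correct and therefore reports the true committed prefix; one must also rule out Byzantine root-committee members broadcasting a $Q$ that omits the block, which fails because $Q$ itself must be backed by $2f+1$ matching histories. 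Pinning down that reconciliation is the delicate part, while the normal-operation case and the final induction are routine.
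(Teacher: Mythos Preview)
Your proposal is correct and follows the same decomposition as the paper, which simply states that the theorem follows from Lemmas~\ref{lemma:block commit} and~\ref{lemma:consistent histories} with no further argument. Your additional care in showing that the $(2f{+}1)$-quorum assembled during view change must carry the latest committed block, together with the inductive wrapping over epochs, are details the paper leaves entirely implicit.
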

\begin{figure*}
\begin{tikzpicture}
\begin{groupplot}[group style={group size= 3 by 1},height=6.5cm,width=6.5cm,legend entries = {Bchain-3,BFT,EfficientBFT}, legend style={at={(0.61,.95)},anchor=north east}]
\nextgroupplot[title=Block size 5k ,ylabel={Latency in secs },xlabel=Network size, ymin=0,ymax=20,
symbolic x coords={10,40,70,100,130,200},xtick=data
], 
\node [text width=1em,anchor=north west] at (rel axis cs: 0.8,1)  {\subcaption{\label{a}}};

 \addplot plot coordinates {

 (40, 2.0)
 (70, 3.2)
 (100, 4.9)
 (130, 6.1)
 (200,12.4)

 };
\addplot plot coordinates {

 (40, 1.8)
 (70, 3.1)
 (100, 4.7)
 (130, 6.4)
 (200,12)
 };
\addplot plot coordinates {

 (40, 1.4)
 (70, 2.16)
 (100, 2.9)
 (130, 3.5)
 (200,4.6)
 };
 \legend{$Bchain-3$\\$PBFT$\\$Proteus$\\} 
\nextgroupplot[title=Block size 10k ,
xlabel=Network size, ymin=0,ymax=30,
symbolic x coords={10,40,70,100,130,200},xtick=data],
\node [text width=1em,anchor=north west] at (rel axis cs: 0.8,1)  {\subcaption{\label{b}}};
\addplot plot coordinates {
 (40, 3.6)
 (70, 5.7)
 (100, 8.46)
 (130, 10.37)
 (200,19.5)

 };
\addplot plot coordinates {

 (40, 3.7)
 (70, 6.63)
 (100, 10.6)
 (130, 15.6)
 (200,25.4)
 };
\addplot plot coordinates {

 (40, 2.5)
 (70, 4)
 (100, 5.5)
 (130, 6.6)
 (200,8.9)
 };
 \legend{$Bchain-3$\\$PBFT$\\$Proteus$\\}  
 \nextgroupplot[title=Block size 15k,ymin=0 ,ymax=40
 ,
 xlabel=Network size, 
 symbolic x coords={10,40,70,100,130,200},xtick=data],
 \node [text width=1em,anchor=north west] at (rel axis cs: 0.8,1)  {\subcaption{\label{C}}};
 \addplot plot coordinates {

 (40, 5.3)
 (70, 8.16)
 (100, 12)
 (130, 14.4)
 (200,26.7)

 };
\addplot plot coordinates {

 (40, 6.14)
 (70, 9.8)
 (100, 16.64)
 (130, 24)
 (200,35.5)
 };
\addplot plot coordinates {

 (40, 3.7)
 (70, 6.2)
 (100, 8.2)
 (130, 9.6)  
 (200,13.4)
 };
 \legend{$Bchain-3$\\$PBFT$\\$Proteus$\\} 
\end{groupplot}

 \end{tikzpicture}
 \caption{Protocol latencies with block sizes 5000, 10000 and 15000} \label{fig:Block-Latency-Measurement}

\end{figure*}
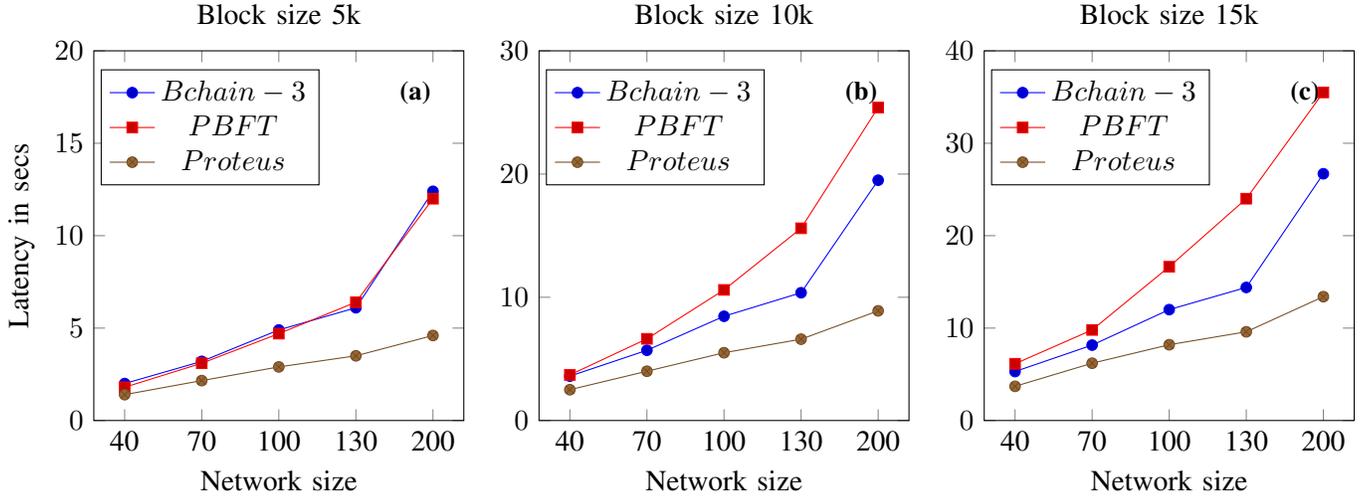

\begin{figure*}
    \centering
   
\begin{tikzpicture}

    \begin{groupplot}[group style={group size= 3 by 1,horizontal sep=18pt},ybar,height=6cm,width=6.5cm, enlargelimits=0.13,legend entries = {Bchain-3,PBFT,Proteus},/pgf/bar width=4.5pt,
    grid=both,     minor tick num=2, scaled ticks=base 10:-3,    
]
\nextgroupplot[title= Block size 5k ,ylabel={Throughput in tx/secs },symbolic x coords={10,40,70,100,130,200},xtick=data,xlabel = Network size, ymax=4000],
\node [text width=1em,anchor=north west] at (rel axis cs: 0,1)  {\subcaption{\label{a}}};

\addplot coordinates {(40,2454) (70,1558) (100,1015) (130,815) (200,402)};


\addplot coordinates {(40,2807) (70,1584) (100,1054) (130,785)(200,418)};


\addplot coordinates {(40,3546) (70,2309) (100,1720) (130,1444)(200,1086)};

\legend{Bchain-3,PBFT,Proteus}
\nextgroupplot[title= Block size 10k ,
symbolic x coords={10,40,70,100,130,200},xtick=data, xlabel = Network size, ymax=4000],
\node [text width=1em,anchor=north west] at (rel axis cs: 0,1)  {\subcaption{\label{b}}};

\addplot coordinates {(40,2769) (70,1783) (100,1181) (130,964) (200,513)};

\addplot coordinates { (40,2704) (70,1508) (100,945) (130,666) (200, 394)};

\addplot coordinates { (40,3921) (70,2491) (100,1811) (130,1516)  (200,1123)};

\nextgroupplot[title= Block size 15k ,
,symbolic x coords={10,40,70,100,130,200},xtick=data, xlabel = Network size],
\node [text width=1em,anchor=north west] at (rel axis cs: 0,1)  {\subcaption{\label{c}}};

\addplot coordinates { (40,2852) (70,1836) (100,1255) (130,1044)(200,560) };

\addplot coordinates {(40,2644) (70,1534) (100,959) (130,630) (200,422)};

\addplot coordinates {(40,4029) (70,2433) (100,1838) (130,1555) (200,1119)};

\end{groupplot}
\end{tikzpicture}
\caption{Protocol throughput with Block size 5000, 10000 and 15000} \label{fig:Block-Throughput}

\end{figure*}
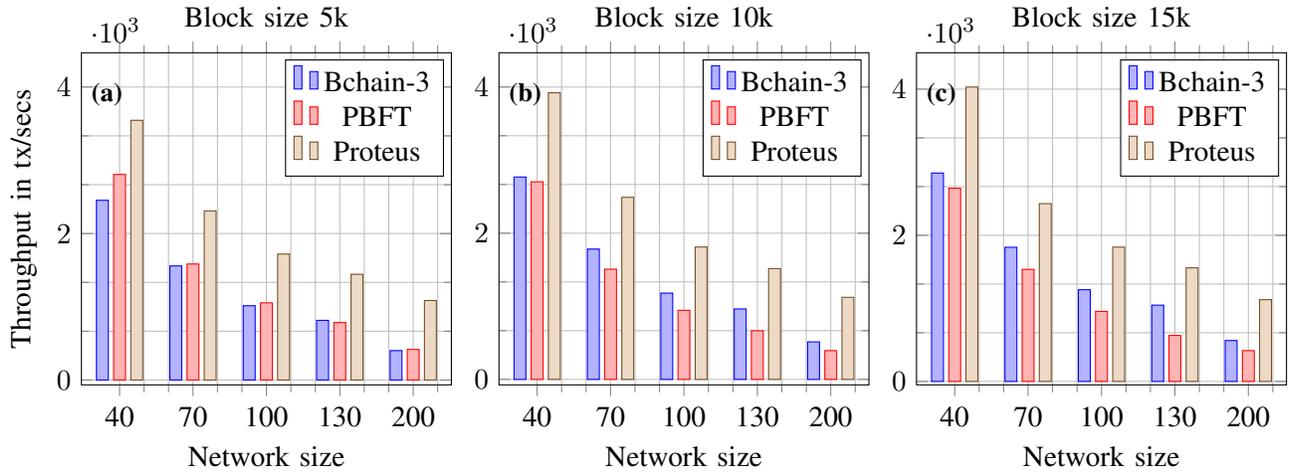

\section{Message Complexity Analysis}
\label{Message Complexity}

The root committee nodes execute a BFT protocol among them with quadratic message complexity with respect to the committee $c$, where both the prepare and commit phases cause $O(c^2)$ messages. The communication between the root committee $C$ and regular replicas requires $O(cn)$ messages: the $c$ root committee nodes send the block to $n$ regular replicas ($cn$ messages),
and then the root committee nodes receive back regular replicas signatures ($cn$ messages)
and upon aggregation of signatures root committee members broadcast at least $2f+1$ aggregated signatures to regular replicas ($cn$ messages). 
Thus, total message complexity during normal operation will be
$O(c^2+cn)$
and for large $n$, we have $n>>c$, we will get:
\begin{equation*}
\mbox{Message complexity ($n>>c$): }  O(cn)
\end{equation*}
View change is either caused by timeout or by receiving proof of maliciousness against the root committee. Based on Algorithm \ref{Algorithm:Regular ViewChange} and Algorithm \ref{Algorithm: root committee ViewChange} there are multiple rounds of communication between  root committee of size $c$ and the regular replicas of the size $n-c$ effectively making view change message complexity as $O(cn)$.

\section{Experiments and Evaluations}
\label{section:experiments}

We have implemented the Proteus protocol in about 2.5k lines of Golang code. We also implemented PBFT to be used as a baseline. We selected PBFT because while other protocols have improved on PBFT performance, they usually switch to $n \times n$ broadcast (legacy PBFT) as a fallback measure if a certain threshold number of failures have occurred. 
Additionally we also implemented Bchain-3 which belongs to the family of chain-based BFT protocols, where instead of a broadcast the protocols propagate messages along a predetermined chain order. 
Bchain-3 has the potential to decrease the number of messages that are sent in the network, however, latency may increase due to a possibly long chain of message relays.
Both BFT and Bchain-3 are also implemented in Golang for the experiments. We think comparing our protocol with different flavors of BFT-based protocols (broadcast and chain based BFT) provide good insight into the performance of Proteus.

We conducted our experiments in the Amazon Web Services (AWS) cloud. For each replica in the network we used instances of type {\em t2.large} in AWS. Each {\em t2.large} instance has 2 virtual CPU cores with 8GB of memory. The experiments were performed with network sizes of  $40$, $70$, $100$ and $130$ and $200$ replicas. We used Equation \ref{eqn:basic1} to get suitable value for the root committee size $c$,
for the different numbers of replicas. 
Given $n$ and maximum failure probability of $P_f\leq 8.9 \cdot 10^{-7}$,
we selected the various root committee sizes to be  18 (40), 27 (70), 30 (100), 33 (130), and 36(200). We also used different block sizes with $5k$, $10k$ and  $15k$ transactions. The transactions in the block are simple blockchain transactions that are randomly generated and transfer funds from one account to another account. 

The other messages to obtain consensus are smaller size since they contain only hashes of the block, signatures and other information that do not take as much space.
Replicas perform regular operations on the blocks they receive and
they also maintain a complete history of blocks (complete blockchain).
Namely, operations performed by each replica include checking its history, hashes, and signatures to verify the validation of transactions in the block.

\subsection{Analysis of experimental results}
We measure latency in an epoch as the time between the primary root committee proposing a block until the time that the block is inserted in all the local histories.
We average the latency over several epochs.
Figure \ref{fig:Block-Latency-Measurement} provides a comparison of latency measurements among Bchain-3, PBFT, and Proteus. 
The difference in latency is small for smaller network sizes, but it increases as the network size grows. Additionally, we observe that the latency is affected by the block size, so that if the block size increases then the latency increases as well. For example in Figure \ref{fig:Block-Latency-Measurement} (a), the latency difference among protocols is smaller, but as the block size increases (Figure \ref{fig:Block-Latency-Measurement} (b) and (c))
we can see that the latency of PBFT increases much faster (due to its high message complexity) followed by Bchain-3 (due to longer length of the critical chain path). In all cases Proteus provides better latency than PBFT and Bchain-3, which demonstrates the scalability of our protocol.

We also measure the throughput, which is the number of committed transactions per second that are appended to the blockchain (Figure \ref{fig:Block-Throughput}).
Similarly to latency, by observing Figure \ref{fig:Block-Throughput} we can see that Proteus outperforms PBFT and Bchain-3 for all the test cases. The performance superiority of Proteus is more visible (two times better than other two protocols), 
when network size increases. For network size $200$ and blocksize $15k$ throughput for Bchain, PBFT and Proteus are $560$tx/sec,$422$ tx/sec, and $1119$ tx/sec. For size $10k$ throughput for Bchain, PBFT and Proteus is $513$ tx/sec,$394$ tx/sec, and $1123$ tx/sec respectively.  For blocksize $5k$, throughput for Bchain, PBFT and Proteus is $402$ tx/sec, $418$ tx/sec and $1086$ respectively.
This also demonstrates the scalability potential of our protocol.

\section{Conclusion and Future Work}
In this paper we presented Proteus, a BFT-based consensus protocol for blockchains that provides better latency and throughput than the state of the art BFT protocols (Bchain and PBFT). Proteus, provides consistent performance regardless of number of failures encountered in the network whereas other BFT  improvements suffer from fall back performance degradation as the number of failures in the network reach the threshold.
Our future work will mainly focus on how to further improve scalability(through sharding) and security of BFT-based protocols while having minimum effects on throughput and latency. 
\label{Conclusions}

\bibliographystyle{IEEEtran}
\bibliography{OPBFT,TBFT}

\end{document}